\newcommand{\ver}{arxiv}
\newcommand{\arxapp}[2]{\ifthenelse{\equal{\ver}{conf}}{#2}{#1}}
\let\doendproof\endproof
\renewcommand\endproof{~\hfill$\qed$\doendproof}
\newcommand{\rac}[1]{\ensuremath{#1}-bend RAC\xspace}
\newcommand{\good}{good\xspace}
\begin{document}
\title{On RAC Drawings of Graphs with~one~Bend~per~Edge}
\author{Patrizio Angelini, Michael~A.~Bekos, Henry~F\"orster, Michael~Kaufmann} 
\authorrunning{P.~Angelini et al.}

\institute{%
Wilhelm-Schickhard-Institut f\"ur Informatik, Universit\"at T\"ubingen, Germany\\
\texttt{$\{$angelini,bekos,foersth,mk$\}$@informatik.uni-tuebingen.de}
}


\maketitle

\begin{abstract}
	A \emph{$k$-bend right-angle-crossing drawing} (or \emph{\rac{k} drawing}, for short) of a graph is a polyline drawing where each edge has at most $k$ bends and the angles formed at the crossing points of the edges are~$90^\circ$. Accordingly, a graph that admits a \rac{k} drawing is referred to as \emph{$k$-bend right-angle-crossing graph} (or \emph{\rac{k}}, for short).
	
	In this paper, we continue the study of the maximum edge-density of \rac{1} graphs. We show that an $n$-vertex \rac{1} graph cannot have more than $5.5n-O(1)$ edges. We also demonstrate that there exist infinitely many $n$-vertex \rac{1} graphs with exactly $5n-O(1)$ edges. Our results improve both the previously known best upper bound of $6.5n-O(1)$ edges and the corresponding lower bound of $4.5n-O(\sqrt{n})$ edges by Arikushi et al. (Comput. Geom. 45(4), 169--177 (2012)).
\end{abstract}

\section{Introduction}
\label{sec:introduction}

A recent research direction in Graph Drawing, which is currently receiving a great deal of attention~\cite{Shonan2016,Dagstuhl2016,SoCG2017}, focuses on combinatorial and algorithmic aspects for families of graphs that can be drawn on the plane while avoiding specific kinds of edge crossings; see, e.g.,~\cite{DBLP:journals/corr/abs-1804-07257} for a survey. This direction is informally recognized under the term ``beyond planarity''. An early work on beyond planarity (and probably the one that initiated this direction in Graph Drawing) is due to Didimo, Eades, and Liotta~\cite{DBLP:journals/tcs/DidimoEL11}, who introduced and first studied the family of graphs that admit polyline drawings, with few bends per edge, in which the angles formed at the edge crossings are $90^\circ$. Their primary motivation stemmed from experiments indicating that the humans' abilities to read and understand drawings of graphs are not affected too much, when the edges cross at large angles~\cite{DBLP:conf/apvis/Huang07,DBLP:journals/vlc/HuangEH14} and the number of bends per edge is limited~\cite{DBLP:journals/iwc/Purchase00,DBLP:journals/ese/PurchaseCA02}. Their work naturally gave rise to a systematic study of several different variants of these graphs; see, e.g.,~\cite{DBLP:journals/jgaa/AngeliniCDFBKS11,DBLP:journals/jgaa/ArgyriouBS12,DBLP:journals/comgeo/ArikushiFKMT12,DBLP:journals/tcs/BekosDLMM17,DBLP:journals/ipl/DidimoEL10,DBLP:journals/dam/EadesL13,DBLP:journals/algorithmica/GiacomoDEL14,DBLP:journals/mst/GiacomoDLM11}.

Formally, a \emph{$k$-bend right-angle-crossing drawing} (or \emph{\rac{k} drawing}, for short) of a graph is a polyline drawing where each edge has at most $k$ bends and the angles formed at the crossing points of the edges are $90^\circ$. Accordingly, a graph that admits a \rac{k} drawing is referred to as \emph{$k$-bend right-angle-crossing graph} (or \emph{\rac{k}}, for short); a \rac{0} graph (drawing) is also called a \emph{straight-line RAC graph} (\emph{drawing}). 

There exist several results for straight-line RAC graphs. Didimo et al.~\cite{DBLP:journals/tcs/DidimoEL11} showed that a straight-line RAC graph with $n$ vertices has at most $4n-10$~edges, which is a tight bound, i.e., there exist infinitely many straight-line RAC graphs with $n$ vertices and exactly $4n-10$ edges. These graphs are actually referred to as \emph{optimal} or \emph{maximally-dense} straight-line RAC and are in fact $1$-planar~\cite{DBLP:journals/dam/EadesL13}, i.e., they admit drawings in which each edge is crossed at most once. In general, however, deciding whether a graph is straight-line RAC is NP-hard~\cite{DBLP:journals/jgaa/ArgyriouBS12}, and remains NP-hard even if the drawing must be upward~\cite{DBLP:journals/jgaa/AngeliniCDFBKS11} or $1$-planar~\cite{DBLP:journals/tcs/BekosDLMM17}. Bachmaier et al.~\cite{DBLP:journals/dam/BachmaierBHNR17} and Brandenburg et al.~\cite{DBLP:journals/tcs/BrandenburgDEKL16} presented interesting relationships between the class of straight-line RAC graphs and subclasses of $1$-planar graphs. Variants, in which the vertices are restricted on two parallel lines or on a circle, have been studied by Di~Giacomo et al.~\cite{DBLP:journals/algorithmica/GiacomoDEL14}, and by Hong and Nagamochi~\cite{DBLP:conf/wg/HongN15}. 

An immediate observation emerging from this short literature overview is that the focus has been primarily on the straight-line case; the results for RAC drawings with bends are significantly fewer. Didimo et al.~\cite{DBLP:journals/tcs/DidimoEL11} observed that $1$- and \rac{2} graphs have a sub-quadratic number of edges, while any graph with $n$ vertices admits a \rac{3} drawing in $O(n^4)$ area; the required area was improved to $O(n^3)$ by Di~Giacomo et al.~\cite{DBLP:journals/mst/GiacomoDLM11}. 
Quadratic area for \rac{1} drawings can be achieved for subclasses of $1$-plane graphs~\cite{clwz-brdngqa-eurocg18}; for general $1$-plane graphs the known algorithm may yield \rac{1} drawings with super-polynomial area~\cite{DBLP:journals/tcs/BekosDLMM17}. The best-known upper bounds on the number of edges of $1$- and \rac{2} graphs are due to Arikushi et al.~\cite{DBLP:journals/comgeo/ArikushiFKMT12}, who showed that these graphs can have at most $6.5n-13$ and $74.2n$ edges, respectively. Arikushi et al.~\cite{DBLP:journals/comgeo/ArikushiFKMT12} also presented $1$- and \rac{2} graphs with $n$ vertices, and $4.5n-O(\sqrt{n})$ and $7.83n-O(\sqrt{n})$ edges, respectively. Angelini et al.~\cite{DBLP:journals/jgaa/AngeliniCDFBKS11} have shown that all graphs with maximum vertex degree $3$ are \rac{1}, while those with maximum vertex degree~$6$ are \rac{2}. It is worth noting that the complexity of deciding whether a graph is $1$- or \rac{2} is still open. 

\medskip\noindent\textit{Our contribution:} In this work, we present improved lower and upper bounds on the maximum edge-density of \rac{1} graphs. Note that this type of problems is commonly referred to as Tur\'an type, and has been widely studied also in the framework of beyond planarity; see, e.g.,~\cite{Ackerman09,DBLP:journals/corr/Ackerman15,DBLP:journals/jgaa/AckermanKV18,AckermanT07,AgarwalAPPS97,DBLP:conf/gd/Bekos0R16,CheongHKK15,FoxPS13,DBLP:journals/corr/KaufmannU14,PachRTT06,PachT97,MR0187232}. More precisely, in Section~\ref{sec:upper-bound}, we show that an $n$-vertex \rac{1} graph cannot have more than $5.5n-O(1)$ edges, while in Section~\ref{sec:lower-bound} we demonstrate that there exist infinitely many \rac{1} graphs with $n$ vertices and exactly $5n-O(1)$ edges. These two results together further narrow the gap between the best-known lower and upper bounds on the maximum edge-density of \rac{1} graphs (from $2n$ to $n/2$). Our approach for proving the upper bound in Section~\ref{sec:upper-bound} builds upon the charging technique by Arikushi et al.~\cite{DBLP:journals/comgeo/ArikushiFKMT12}, which we overview in Section~\ref{sec:preliminaries}. We discuss open problems in Section~\ref{sec:conclusions}.

\section{Overview of the Charging Technique}
\label{sec:preliminaries}

In this section, we introduce the necessary notation and we describe the most important aspects of the charging technique by Arikushi et al.~\cite{DBLP:journals/comgeo/ArikushiFKMT12} for bounding the maximum number of edges of a \rac{1} graph. Consider an $n$-vertex \rac{1} graph $G=(V,E)$, together with a corresponding \rac{1} drawing $\Gamma$ with the minimum number of crossings. The edges of $G$ are partitioned into two sets $E_0$ and $E_1$, based on whether they are crossing-free in $\Gamma$ (set $E_0$) or they have at least a crossing (set $E_1$). Let $G_0$ and $G_1$ be the subgraphs of $G$ induced by $E_0$ and $E_1$, respectively. 

Since $G_0$ is plane, $|E_0| \leq 3n-6$ holds. To estimate $|E_1|$, Arikushi et al.\ consider the graph $G_1'$ that is obtained from the drawing of $G_1$, by replacing each crossing point with a dummy vertex; we call $G_1^\prime$ the \emph{planarization} of the drawing of $G_1$. Let $V_1'$, $E_1'$, and $F_1'$ be the set of vertices, edges, and faces of $G_1'$, respectively. Let $\deg(v)$ be the degree of a vertex $v$ of $G_1'$ and $s(f)$ be the size of a face $f$ of $G_1'$, that is, the number of edges incident to $f$. In the charging scheme, every vertex $v$ of $G_1'$ is initially assigned a charge $ch(v)$ equal to $\deg(v) - 4$, while every face $f$ of $G_1'$ is initially assigned a charge $ch(f)$ equal to $s(f) - 4$. By Euler's formula, the sum of charges over all vertices and faces of $G_1'$ is:
%
\[
\sum_{v \in V_1'} (\deg(v)-4) + \sum_{f \in F_1'} (s(f) - 4) = 2|E_1'| - 4 |V_1'| + 2|E_1'| - 4 |F_1'| = -8
\]

In two subsequent discharging phases, they redistribute the charges in $G_1'$ so that  
\begin{inparaenum}[(i)]
	\item the total charge remains the same, and 
	\item all faces have non-negative charges. 
\end{inparaenum}
In the first discharging phase, for every edge $e$ with one bend, half a unit of charge is passed from each of its two endvertices to the face that is incident to the convex bend of $e$. Arikushi et al. show that each face of size less than $4$ has at least one convex bend, so it receives at least one unit of charge. Hence, after this phase, the only faces that have negative charges are the so-called \emph{lenses}, which have size $2$ and only one convex bend (each lens has charge  $-1$). On the other hand, the charge of every vertex $v \in V_1'$ is at least $ch^\prime(v)=\frac{1}{2}\deg(v)-4$. 

In the second discharging phase, Arikushi et al.\ exploit the crossing minimality of $\Gamma$ to guarantee the existence of an injective mapping from the lenses to the convex bends incident to faces of $G_1'$ with size at least $4$. Since each such bend yields one additional unit of charge to its incident face, and since this face has already a non-negative charge due to its size, it is possible to move this unit from the face to the mapped lens without introducing faces with negative charge. Hence, after the second phase, the charge $ch''(f)$ of each face $f \in F_1'$ is non-negative (and at least as large as its initial charge, i.e., $ch''(f)\geq ch(f)$). Since $ch''(v) = ch(v)$, $|E_1| \leq 4n-8$ can be proved as follows:
\begin{equation}
|E_1|-4n = \sum_{v \in V_1'} \left( \frac{1}{2}\deg(v)-4 \right) \leq 
\sum_{v \in V_1'} ch'' (v) \leq 
\sum_{v \in V_1'} ch'' (v) + \sum_{f \in F_1'} ch'' (f) = -8
\label{eq:charges}
\end{equation}

So far, graph $G$ has $|E_0| + |E_1| \leq 7n-14$ edges. Arikushi et al.\ improve this bound in a conclusive analysis based on the observation that a triangular face of $G_0$ cannot contain edges of $E_1$. Hence, if $G_0$ contains exactly $3n-6$ edges, then it is a triangulation, and thus $E_1=\emptyset$.
More in general, they considered how many edges $E_1$ may contain when $G_0$ is a graph obtained from a triangulation by removing $k$ edges. Let $V_0$, $E_0$, and $F_0$ be the sets of vertices, edges, and faces of $G_0$, respectively, and let $d(f)$ be the \emph{degree} of a face $f \in F_0$, i.e., the number of its distinct~vertices. 
Then, by Eq.~\ref{eq:charges} we have:
\begin{equation}
|E_1| \leq \sum_{f \in F_0; d(f) > 3}(4d(f)-8)
\label{eq:crossed}
\end{equation}

Arikushi et al.\ proved that the right-hand side of Eq.~\ref{eq:crossed} is at most~$8k$. In fact, the removal of any crossing-free edge $e$ leads to one of the following~cases.
\begin{enumerate}[C.1]
	\item \label{C.arikushi1} if $e$ was a bridge of a face, this yields a face with the same degree, which leaves the right-hand side of Eq.~\ref{eq:crossed} unchanged;
	\item \label{C.arikushi2} if $e$ was adjacent to two triangles, this yields a new face $f$ of degree $d(f) = 4$, which can contain at most $4d(f) - 8 = 8$ edges of $E_1$, which increases the right-hand side of Eq.~\ref{eq:crossed} by $8$;
	\item \label{C.arikushi3} if~$e$ was adjacent to a triangle and to a face of degree $d(f)$ (containing at most $4d(f)-8$ edges of $E_1$), this yields a new face of degree at most $d(f)+1$, which can contain at most $4(d(f)+1)-8=4d(f)-4$ edges of $E_1$, which increases the right-hand side of Eq.~\ref{eq:crossed} by at most $4$; finally,
	\item \label{C.arikushi4} if~$e$ was adjacent to two faces $f_1$ and $f_2$ such that $d(f_1),d(f_2) > 3$ (containing at most $4(d(f_1)+d(f_2))-16$  edges of $E_1$), this yields a new face of degree at most $d(f_1)+d(f_2)-2$, which contains at most $4(d(f_1)+d(f_2)-2)-8=4(d(f_1)+d(f_2))-16$ edges of $E_1$, leaving the right-hand side of Eq.~\ref{eq:crossed}~as is.
\end{enumerate}
Hence, the removal of $k$ uncrossed edges increases the right-hand side of Eq.~\ref{eq:crossed} by at most $8k$.
With this observation, Arikushi et al.\ derived two different upper bounds on the number of edges of $G$, namely:
\begin{eqnarray}
\label{eq:bound-1} |E| \leq (3n-6-k) + 4n-8 = 7n-14-k \\
\label{eq:bound-2} |E| \leq (3n-6-k) + 8k 
\end{eqnarray}
The minimum of the two bounds is maximized when $k=n/2-1$, which yields $|E| \leq 6.5n-13$. Arikushi et al.\ noticed that the bound of $8k$ is an overestimation, and that possible refinements would lead to improvements of the overall bound.

\section{An Improved Upper Bound}
\label{sec:upper-bound}

In this section, we describe how to improve the analysis of the charging scheme described in Section~\ref{sec:preliminaries} to obtain a better upper bound. W.l.o.g., we assume that $G$ is connected and that $n \geq 5$. Let $f$ be a face of $G_0$. As in the previous section, we denote by $d(f)$ the degree of $f$, that is, the number of distinct vertices of $f$. Since $f$ is not necessarily simple or connected, the boundary of $f$ is a disjoint set of (not necessarily simple) cycles, which are called \emph{facial walks}; see Fig.~\ref{fig:G0}. We denote by $\ell(f)$ the length of face $f$, that is, the number of edges (counted with multiplicities) in all facial walks of $f$.

Since a vertex $v$ may occur more than once in a facial walk of $f$, we denote by $m_f(v)$ the number of its occurrences in $f$ minus one (that is, the number of extra occurrences beyond its first). The sum of such extra occurrences over all the vertices of face $f$ is denoted by $m(f)$, that is, $m(f) = \sum_{v \in f}m_f(v)$. Further, we denote by $b(f)$ the number of biconnected components of all facial walks of $f$. Finally, we assume that an isolated vertex of $f$ (if any) is not a biconnected component of $f$, and we denote by $i(f)$ the number of isolated vertices of $f$. It is not difficult to see that $\ell(f) = d(f)+ m(f) - i(f)$. 

\begin{figure}[t]
	\centering
	\subfloat[\label{fig:G0}{}]
	{\includegraphics[scale=0.8,page=1]{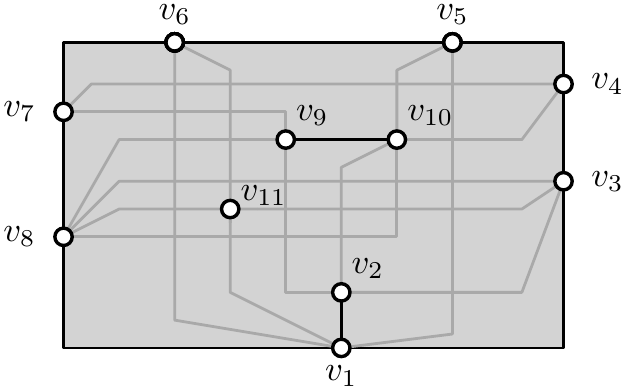}}
	\hfil 	
	\subfloat[\label{fig:G1}{}]
	{\includegraphics[scale=0.8,page=2]{faces}}
	\caption{
	(a)~Illustration of a non-simple, non-connected face $f$ of $G_0$ (colored in black). 
	The edges of $G_1$ are colored gray. 
	Face $f$ consists of two facial walks ($w_1=\langle v_1, v_2, v_1, v_3, v_4, v_5, v_6, v_7, v_8 \rangle$ and 
	$w_2=\langle v_9, v_{10}\rangle$)
	and an isolated vertex ($v_{11}$).
	Observe that $d(f)=11$ (as $f$ contains $11$ distinct vertices), 
	$\ell(f)=11$ (as the sum of the lengths of $w_1$ and $w_2$ is $11$),
	$m_f(v_1)=1$ (as $v_1$ appears twice in $w_1$), 
	$i(f)=1$ (as $v_{11}$ is an isolated vertex of $f$), and
	$b(f)=3$ (as $w_1$ consists of two biconnected components, while $w_2$ is biconnected).
	Face $f$ is \good, since each of its edges is \good. Note that removing edge $(v_4,v_7)$ would make edges $(v_5,v_6)$ and $(v_9,v_{10})$ not \good.
	(b)~The faces of $F_1^\prime(f)$ that are surrounding the three biconnected components of $f$ 
	are tiled in gray.
	}
	\label{fig:face-properties}
\end{figure}

Let $G^\prime$ be the planarization of the drawing $\Gamma$ of $G$. As opposed to $G_0$, whose faces are not necessarily connected, the faces of $G^\prime$ are in fact connected, since~$G$ is connected. Let $f$ be a face of $G_0$ and let $e$ be any edge incident to $f$. We say  that edge $e$ is \emph{\good} for $f$ if and only if there is no other edge $e^\prime$ incident to $f$ such that $e$ and $e^\prime$ are both incident to a face $g$ of $G^\prime$ that lies inside $f$. Accordingly, face $f$ is called \emph{\good} if and only if either all its edges are \good for $f$ or if $f$ is a triangle; see Fig.~\ref{fig:G0}. Note that, if each face of $G_0$ is good, then every face of the planarization $G^\prime$ is either a triangle of crossing-free edges or contains at most one crossing-free edge, and vice versa. In the next two lemmas, we assume that the faces of $G_0$ are \good; we show later how to guarantee this property. For this, we may need to introduce parallel edges (but no self-loops) in $G_0$, which however are non-homotopic (each region they define contains at least a vertex). Further, we may need to introduce planar edges with more than one bend; this does not affect the discharging scheme of Arikushi et al. which only considers $G_1$. 

\begin{lemma}\label{lem:crossed-in-f}Let $\Gamma$ be a drawing of $G$ such that all faces of $G_0$ are \good. Then, each face $f$ of $G_0$ contains at most $2d(f) -2 m(f) + 2i(f) +4b(f)- 8$ edges of~$G_1$.
\end{lemma}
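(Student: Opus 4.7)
My plan is to apply a localized version of Arikushi's discharging scheme (Section~\ref{sec:preliminaries}) to the portion of the planarization $G'$ drawn inside the closure $\bar f$. Let $G'_f$ denote this local sub-planarization; it has $d(f)+c(f)$ vertices (original plus $c(f)$ crossings inside $f$), $\ell^*(f)+|E_1(f)|+2c(f)$ edges (where $\ell^*(f)=\ell(f)-\mathrm{br}(f)$ counts distinct $G_0$-edges on $\partial f$ with $\mathrm{br}(f)$ the number of bridges of $f$), and $F(f)+h(f)$ faces, where $h(f)$ is the number of faces lying outside $\bar f$. A preliminary combinatorial step is to derive the identities $\ell(f)=d(f)+m(f)-i(f)$, $m(f)=b(f)-w(f)$ (where $w(f)$ is the number of facial walks, obtained by summing, over each walk, the number of blocks in its block--cut tree), and $h(f)=b(f)-\mathrm{br}(f)$ (each non-trivial biconnected component of a facial walk contributes one outside face, while bridges do not). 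Euler's formula on $G'_f$ then yields a closed-form expression for the initial charge sum $\sum_{v\in V(f)}(\deg_{G'_f}(v)-4)+\sum_{g\in F(f)}(s(g)-4)$ in terms of $d(f),m(f),i(f),b(f)$, and $C(G'_f)$, simplifying to $-4-4C(G'_f)-\ell^*(f)-3\mathrm{br}(f)+4b(f)$.

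Next I run the two discharging phases on the bent $G_1$-edges inside $f$: each bent edge transfers $\tfrac{1}{2}$ from each endpoint to its convex-bend face, and each lens is rescued by one unit from a face of size at least $4$ via an injective mapping derived from crossing-minimality. The good property---that no two $G_0$-edges on $\partial f$ share a face of $G'$ inside $f$---is used in a key way: for every face $g$ of $G'$ inside $f$ with $s(g)<4$, the at-most-one $G_0$-edge on $\partial g$ can be mentally removed to view $g$ as a face of the local planarization of $G_1$ alone, where Arikushi's convex-bend analysis together with crossing-minimality guarantees a convex bend on $\partial g$. Consequently, after both phases every face $g\in F(f)$ has non-negative charge, so $\sum_{g\in F(f)} ch''(g)\geq 0$.

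Combining charge conservation with the standard vertex lower bound, which reads $ch''(v)\geq d_0(v)+\tfrac{1}{2}d_1^f(v)-4$ for each original $v\in V(f)$ (here $d_0(v)$ and $d_1^f(v)$ denote its numbers of $G_0$- and $G_1$-incidences inside $\bar f$) and $ch''(v)=0$ for each crossing, one obtains $\sum_{v\in V(f)} ch''(v)\geq 2\ell^*(f)+|E_1(f)|-4d(f)$. Together with the initial-sum identity of the first step this produces $2\ell^*(f)+|E_1(f)|-4d(f) \leq -4-4\,C(G'_f)-\ell^*(f)-3\mathrm{br}(f)+4b(f)$, and substituting $\ell^*(f)=d(f)+m(f)-i(f)-\mathrm{br}(f)$ together with $C(G'_f)\geq 1$ rearranges to $|E_1(f)|\leq 2d(f)-2m(f)+2i(f)+4b(f)-8$, as claimed.

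The main obstacle is the second step, namely extending Arikushi's phase-$1$ convex-bend argument to inside faces of $G'$ that carry a $G_0$-edge on their boundary; a straight $G_0$-edge supplies no bend, and such a face is not a face of $G_1'$, so it is not immediately covered by Arikushi's original analysis. The good property, which isolates at most one such edge per inside face, together with the crossing-minimality of $\Gamma$, should make the reduction to Arikushi's setting possible, but verifying this requires a careful geometric case analysis in the spirit of Arikushi's original argument.
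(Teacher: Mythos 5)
Your overall strategy---localizing Arikushi et al.'s discharging to the interior of $f$---is the right one, but your implementation has two genuine gaps. First, by planarizing all of $G'$ inside $\bar f$ (boundary $G_0$-edges included), you must re-establish that every face of size less than $4$ receives a unit of charge, including faces carrying a $G_0$-edge on their boundary. Your proposed fix, ``mentally removing'' that edge so as to view the face as a face of the planarization of $G_1$ alone, does not work: deleting the edge merges the face with the region on its other side, so the resulting face of $G_1'$ is a different, larger face, and Arikushi's convex-bend lemma applied to it says nothing about your original small face. Worse, in the setting of this lemma the crossing-free edges may be curves with arbitrarily many bends (this is precisely what makes the paper's later augmentation step legal), so no geometric convex-bend argument can be run on faces bounded by them. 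You correctly flag this as ``the main obstacle,'' but it is not a technicality to be deferred---it is the heart of the proof, and your argument does not resolve it. The paper sidesteps the issue entirely: it planarizes only $G_1(f)$, so Arikushi's two phases apply verbatim, and extracts the improvement from a single observation---since every edge of $f$ is \good, the face $f'_c$ of $G_1'(f)$ surrounding a biconnected component $c$ of $\partial f$ has length at least $2\ell(c)$, hence charge at least $2\ell(c)-4$ before and therefore also after discharging; subtracting the guaranteed surpluses $\sum_{c}(2\ell(c)-4)=2\ell(f)-4b(f)$ from the total charge $-8$ gives exactly the stated bound.

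Second, your bookkeeping does not close. Following your stated inequalities literally, $|E_1(f)| \leq -4-4C(G'_f)-3\ell^*(f)-3\mathrm{br}(f)+4b(f)+4d(f)$, and substituting $\ell^*(f)=d(f)+m(f)-i(f)-\mathrm{br}(f)$ and $C(G'_f)\geq 1$ yields $|E_1(f)| \leq d(f)-3m(f)+3i(f)+4b(f)-8$, which differs from the bound you claim to obtain by $\ell(f)=d(f)+m(f)-i(f)$. This ``stronger'' bound is in fact false: for the quadrilateral face of Fig.~\ref{fig:smallFaces1}, with $(d(f),m(f),i(f),b(f))=(4,0,0,1)$, it evaluates to $0$, yet such a face accommodates its two crossing diagonals. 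So at least one of your identities---most plausibly the closed form for the initial charge sum or the vertex lower bound---is incorrect, independently of the geometric gap above.
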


\begin{proof} 
Consider the subgraph $G(f)$ of $G$ which is induced by the interior of $f$ and let $\Gamma(f)$ be the drawing of $G(f)$ derived from $\Gamma$.
We denote by $G_1(f)=(V_1(f),E_1(f))$ the subgraph of $G(f)$ induced by the set of crossing edges in $\Gamma(f)$, and by $G_1'(f)$ the planarization of $G_1(f)$.

 Let $B(f)$ be the set of biconnected components of $f$ and $F_1^\prime(f)$ the set of faces of the drawing of $G_1'(f)$ that is derived from $\Gamma(f)$. Since every edge of $f$ is \good, every biconnected component $c \in B(f)$ with length $\ell(c)$ will be surrounded by a face $f^\prime_c \in F_1^\prime(f)$ in $G^\prime_1$ that is of length $\ell(f^\prime_c) \geq 2\ell(c)$; see Fig.~\ref{fig:G1}. Hence, before the discharging phases in the charging scheme of Arikushi et al. (applied on $G_1'(f)$), the charge of face $f^\prime_c$ is at least $2\ell(c)-4$. Since after the second discharging phase, the charge of each face is at least as much as its initial charge, it follows that the charge of face $f^\prime_c$ is still at least $2\ell(c)-4$ even after the discharging phases. Since isolated vertices of $f$ are not surrounded by a face of $F_1^\prime(f)$, summing up the charges of all biconnected components of $f$, we get that 
\[
\sum \limits_{c \in B(f)} ch''(f^\prime_c) \geq  \sum \limits_{c \in B(f)} (2\ell(c)-4) = 2\ell(f) -4b(f) = 2(d(f)+m(f)-i(f)) - 4b(f)
\]
Since, after the second discharging phase, each face has a non-negative charge and the sum of the charges of faces surrounding biconnected components of $f$ is a lower bound for the sum of the charges of all faces in $F^\prime_1(f)$, we get that
\[
\sum \limits_{f^\prime \in F^\prime_1(f)} ch''(f^\prime) - \sum \limits_{c \in B(f)} ch''(f^\prime_c) \geq 0
\]
Hence, by refining Eq.~\ref{eq:charges} we obtain that the number of crossing edges in $G(f)$ can be upper-bounded as follows 
\begin{align*}
|E_1(f)| -4 d(f) & = \sum_{v \in f} \left( \frac{1}{2}\deg(v)-4 \right) \leq \sum \limits_{v \in f} ch''(v) \\
  & \leq \sum \limits_{v \in f} ch'' (v) + \sum \limits_{f^\prime \in F^\prime_1(f)} ch''(f^\prime) - 2(d(f)-m(f)+i(f)) + 4b(f) \\ 
 &  =  -8 - 2(d(f) + m(f)-i(f)) + 4b(f)
\end{align*}
This concludes our proof.
\end{proof}

In the following lemma, we improve Arikushi et al.'s upper bound on the number of edges of $G_1$ that $G$ may contain, when the plane subgraph $G_0$ is obtained from a plane triangulation $T$ by removing $k$~edges, under the assumption that $T$ may contain non-homotopic parallel edges (but no self-loops), and that  each face $f \in F_0$ of $G_0$ is \good.  Let $t(f)$ be the minimum number of edges that must be removed from $T$ to obtain $f$. Similar to Arikushi et al., we preliminarily observe that a face $f$ of $G_0$ with $t(f) = 0$ cannot contain edges of $G_1$ in $G$. If $t(f) = 1$, the only two possible configurations for face $f$ are illustrated  in Figs.~\ref{fig:smallFaces1} and~\ref{fig:smallFaces2}. In both cases, face $f$ can contain at most two crossing edges.  If $t(f) = 2$, the only three possible configurations for face $f$ are illustrated  in Figs.~\ref{fig:smallFaces3}--\ref{fig:smallFaces5}. Then, face $f$ can contain at most five crossing edges. Let $F_0^1$ and $F_0^2$ be the set of faces of $G_0$ that can be obtained from triangulation $T$ by removing $1$ and $2$ edges, respectively, that is, $F_0^1 = \{f \in F_0; t(f)= 1\}$ and $F_0^2 = \{f \in F_0; t(f)= 2\}$. By Lemma~\ref{lem:crossed-in-f} and the previous observations, we have
\begin{equation}
\label{eq:crossed-good}
|E_1| \leq 2|F_0^1| + 5|F_0^2| + \sum_{f \in F_0; t(f) > 2}(2d(f) -2 m(f) + 2i(f) +4b(f)- 8) 
\end{equation}

\begin{figure}[t]
	\centering
	\subfloat[\label{fig:smallFaces1}{$4,0,0,1$}]
	{\includegraphics[page=1,scale=1.1]{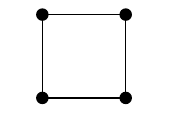}}
	\hfil 	
	\subfloat[\label{fig:smallFaces2}{$3,1,0,2$}]
	{\includegraphics[page=3,scale=1.1]{smallFaces}}
	\hfil 	
	\subfloat[\label{fig:smallFaces3}{$5,0,0,1$}]
	{\includegraphics[page=4,scale=1.1]{smallFaces}}
	\hfil 	
	\subfloat[\label{fig:smallFaces4}{$4,1,0,2$}]
	{\includegraphics[page=2,scale=1.1]{smallFaces}}
	\hfil 	
	\subfloat[\label{fig:smallFaces5}{$3,0,1,1$}]
	{\includegraphics[page=7,scale=1.1]{smallFaces}}
	
	\subfloat[\label{fig:smallFaces6}{$6,0,0,1$}]
	{\includegraphics[page=8,scale=1.1]{smallFaces}}
	\hfil 	
	\subfloat[\label{fig:smallFaces7}{$5,1,0,2$}]
	{\includegraphics[page=5,scale=1.1]{smallFaces}}
	\hfil 	
	\subfloat[\label{fig:smallFaces8}{$4,0,1,1$}]
	{\includegraphics[page=6,scale=1.1]{smallFaces}}
	\hfil 	
	\subfloat[\label{fig:smallFaces9}{$4,2,0,3$}]
	{\includegraphics[page=9,scale=1.1]{smallFaces}}
	\hfil 	
	\subfloat[\label{fig:smallFaces10}{$4,2,0,3$}]
	{\includegraphics[page=10,scale=1.1]{smallFaces}}
	\hfil 	
	\subfloat[\label{fig:smallFaces11}{$4,2,0,3$}]
	{\includegraphics[page=11,scale=1.1]{smallFaces}}
	\caption{All bounded faces that can be obtained from $T$ by removing 
	(a)--(b)~$1$~edge, 
	(c)--(e)~$2$~edges, 
	(f)--(k)~$3$~edges.
	The caption of each subfigure indicates the values of 
	$(d(f),m(f),i(f),b(f))$.}
	\label{fig:smallFaces}
\end{figure}

In the following lemma, we prove that a slight overestimation of the right-hand side of Eq.~\ref{eq:crossed-good} is upper-bounded by $\frac{8}{3}k$, which clearly implies that $|E_1| \leq \frac{8}{3}k$.

\begin{lemma}\label{lem:crossed-better}
If $G_0$ is obtained from triangulation $T$ by removing $k$ edges, then:
\begin{equation}
\frac{8}{3}|F_0^1| + \frac{16}{3}|F_0^2| + \sum_{f \in F_0; t(f) > 2}(2d(f) -2 m(f) + 2i(f) +4b(f)- 8)  \leq \frac{8}{3}k
\label{eq:crossed-better}
\end{equation}
\end{lemma}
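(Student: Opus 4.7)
The plan is to reduce Eq.~\ref{eq:crossed-better} to a per-face inequality. Define the contribution of a face $f \in F_0$ to the left-hand side of Eq.~\ref{eq:crossed-better} by $\psi(f) = 0$ if $t(f) = 0$, $\psi(f) = \tfrac{8}{3}$ if $t(f) = 1$, $\psi(f) = \tfrac{16}{3}$ if $t(f) = 2$, and $\psi(f) = 2d(f) - 2m(f) + 2i(f) + 4b(f) - 8$ if $t(f) \ge 3$. Since $\sum_{f\in F_0} t(f) = k$ (each edge of $T$ absent from $G_0$ is counted once, by the unique face of $G_0$ that contains it in its interior), it suffices to prove the per-face inequality $\psi(f) \le \tfrac{8}{3}\,t(f)$ for every $f \in F_0$; summing over the faces then yields Eq.~\ref{eq:crossed-better}.

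For $t(f)\in\{0,1,2\}$ the per-face inequality holds with equality by the very definition of $\psi$. The non-trivial case is $t(f) \ge 3$, where one must show that $\phi(f) := 2d(f) - 2m(f) + 2i(f) + 4b(f) - 8 \le \tfrac{8}{3}\,t(f)$. I plan to prove this by induction on $t(f)$. The base case $t(f) = 3$ is immediate from the enumeration in Figs.~\ref{fig:smallFaces6}--\ref{fig:smallFaces11}: each of the six depicted configurations satisfies $\phi(f) \le 8 = \tfrac{8}{3}\cdot 3$, with equality in every case except Fig.~\ref{fig:smallFaces8}.

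For the inductive step with $t(f) \ge 4$, I add one edge $e$ of $T$ that lies strictly inside $f$ back into the graph. This edge either (a)~splits $f$ into two subfaces $f_1, f_2$ with $t(f_1) + t(f_2) = t(f) - 1$, or (b)~merges two boundary components of $f$ (or a boundary component and an isolated vertex) into a single new face $f'$ with $t(f') = t(f) - 1$. By the inductive hypothesis, $\psi(f_i) \le \tfrac{8}{3}\,t(f_i)$ in case (a) and $\psi(f') \le \tfrac{8}{3}\,t(f')$ in case (b). Hence the induction closes provided the local change $\phi(f) - \psi(f_1) - \psi(f_2) \le \tfrac{8}{3}$, respectively $\phi(f) - \psi(f') \le \tfrac{8}{3}$, holds in the corresponding case.

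The main technical obstacle is to bound these local jumps when $e$ touches pendant vertices, cut vertices, or isolated vertices of $f$, since there the parameters $d$, $m$, $i$, and $b$ shift in coupled ways (for instance, attaching a pendant to an isolated vertex decreases $i$ and increases $b$ simultaneously, while absorbing a repeated occurrence can lower $m$ and $b$ together). I plan to handle this by a case analysis over the possible positions of $e$'s two endpoints in the block-cut structure of the facial walks of $f$, reusing the local configurations seen in Fig.~\ref{fig:smallFaces} as building blocks; the number of such local configurations is bounded, and a direct computation in each one yields the required inequality. Combining the base case and the inductive step then completes the proof of the per-face inequality, which in turn establishes Eq.~\ref{eq:crossed-better}.
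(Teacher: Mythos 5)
Your overall strategy is genuinely different from the paper's and the skeleton is sound: the identity $\sum_{f\in F_0} t(f)=k$ is correct (each removed edge lies in the interior of exactly one face of $G_0$), so reducing Eq.~\ref{eq:crossed-better} to the per-face inequality $\psi(f)\le \tfrac{8}{3}t(f)$ is a valid and in fact equivalent reformulation, and the base cases $t(f)\le 3$ match the paper's enumeration in Fig.~\ref{fig:smallFaces}. The paper instead runs a single global induction on $k$ and, crucially, \emph{chooses the order} in which the $k$ edges are removed (a BFS over each component of the dual subgraph $D$), which guarantees that every step is either a bridge removal or a merge with a still-triangular face; this eliminates by fiat the hardest situation, namely merging two faces $f_1,f_2$ with $t(f_1),t(f_2)\ge 1$. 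Your per-face induction, by contrast, adds back an \emph{arbitrary} edge of $T$ inside $f$ and therefore must confront exactly that situation in your case~(a).

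That is where the gap is: the entire content of the inductive step is deferred to an unperformed ``case analysis over the possible positions of $e$'s two endpoints in the block-cut structure,'' and no inequality is actually verified there. To close it you would need to establish (i)~that $\phi(f):=2d(f)-2m(f)+2i(f)+4b(f)-8$ is (sub)additive under the split $f\to f_1\cup f_2$, including the non-generic situations where $f_1$ and $f_2$ share vertices besides the endpoints of $e$ (each such shared vertex trades $-2$ in $d$, $-2$ in $m$ against $+4$ in $b$, so the bookkeeping must be done), and (ii)~that the resulting jump $\phi(f)-\psi(f_1)-\psi(f_2)$, which equals $\bigl(\phi(f_1)-\psi(f_1)\bigr)+\bigl(\phi(f_2)-\psi(f_2)\bigr)$ in the additive case, is at most $\tfrac{8}{3}$. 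The latter does appear to hold: the per-face slack $\phi(g)-\psi(g)$ equals $2$ for triangles, $\tfrac43$ for $t(g)=1$, at most $\tfrac23$ for $t(g)=2$, and $0$ for $t(g)\ge 3$, and the only pairings with total slack exceeding $\tfrac83$ force $t(f_1)+t(f_2)\le 1$, hence $t(f)\le 2$, which is outside the inductive range. So your route is viable and would even show that the step works for \emph{every} choice of $e$, not just a carefully ordered one — but as written the proposal asserts rather than proves its key step. A cheaper fix is to import the paper's device: always add back an edge bounding a triangle of $T$ inside $f$, or a bridge, so that only the analogues of Cases~C.\ref{mainCase1} and~C.\ref{mainCase2} arise.
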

\begin{proof}
Our proof is by induction on $k$ and is similar to the corresponding one of Arikushi et al.~(Lemma~5 in~\cite{DBLP:journals/comgeo/ArikushiFKMT12}). In contrast to their proof, we assume that $G_0$ is obtained from triangulation $T$ by removing edges in a certain order. In particular, we want to avoid the case in which the removal of an edge $e$ results in merging two faces $f_1$ and $f_2$ such that $t(f_1),t(f_2)  \geq 1$ (refer to Case C.\ref{C.arikushi4} in Section~\ref{sec:preliminaries}).  
We guarantee this property as follows. Consider the subgraph $D$ of the dual of $T$ induced by the edges that are dual to those that we have to remove to obtain $G_0$. We remove the edges in the order in which their dual edges appear in a BFS traversal of each connected component of $D$. In this way, every inter-level edge in the BFS traversal corresponds to removing an edge that is incident to a triangular face (not visited yet), while each intra-level edge corresponds to removing a bridge from a face that has been created by previously removed edges. In both cases, we avoid merging two faces $f_1$ and $f_2$ such that $t(f_1),t(f_2)  \geq 1$.

Denote by $\tau(G_0)$ the left-hand side of Eq.~\ref{eq:crossed-better}. In the base of the induction, $k=0$ holds. In this case, graph $G_0$ coincides with triangulation $T$ and thus $\tau(G_0)=0$. In the induction hypothesis, we assume that the lemma holds for $k \geq 0$, and we prove that it also holds for $k' = k+1$.
 
Let $G_0'$ be a plane graph obtained from $T$ by removing $k'$ edges, and let $G_0$ be the plane graph obtained from $T$ by removing the same $k'$ edges, except for the last one, which we call $(u,v)$. For $G_0$, by induction, it holds that $\tau(G_0) \leq \frac{8}{3}k$. We consider the following cases:
\begin{enumerate}[C.1]
\item \label{mainCase1} Edge $(u,v)$ is a bridge of a face $f$ in $G_0$ such that $t(f) \geq 3$.  Let $f'$ be the face of $G_0'$ that is obtained by the removal of $(u,v)$. Note that $t(f') \geq 4$. Since $(u,v)$ is a biconnected component of $f$, it holds that $b(f') = b(f) - 1$. Since $(u,v)$ is a bridge, it also holds that $d(f') = d(f)$. To establish the values of $m(f')$ and $i(f')$, we observe that $u$, or $v$, or both may become isolated vertices of $G_0^\prime$ after the removal of $(u,v)$. We study these cases separately. 
\begin{figure}[t]
	\centering
	\subfloat[\label{fig:mainCases1}{C.\ref*{mainCase1a}}]
	{\includegraphics[page=1,scale=1.1]{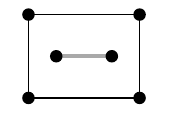}}
	\hfil 	
	\subfloat[\label{fig:mainCases2}{C.\ref*{mainCase1b}}]
	{\includegraphics[page=2,scale=1.1]{maincases}}
	\hfil 	
	\subfloat[\label{fig:mainCases3}{C.\ref*{mainCase1c}}]
	{\includegraphics[page=3,scale=1.1]{maincases}}
	\hfil 	
	\subfloat[\label{fig:mainCases4}{C.\ref*{mainCase2a}}]
	{\includegraphics[page=4,scale=1.1]{maincases}}
	\hfil 	
	\subfloat[\label{fig:mainCases5}{C.\ref*{mainCase2b}}]
	{\includegraphics[page=5,scale=1.1]{maincases}}
	\hfil 	
	\subfloat[\label{fig:mainCases6}{C.\ref*{mainCase2b}}]
	{\includegraphics[page=6,scale=1.1]{maincases}}
	\caption{Illustrations of Cases~C.\ref{mainCase1} and~C.\ref{mainCase2}. Edge $(u,v)$ is gray-colored.}
	\label{fig:mainCases}
\end{figure}
\begin{enumerate}[a)]
\item \label{mainCase1a} Both $u$ and $v$ become isolated vertices in $G_0^\prime$; see Fig.~\ref{fig:mainCases1}. Then $m(f') = m(f)$ and $i(f') = i(f) + 2$. Since $2d(f') - 2 m(f') + 2i(f') + 4b(f')- 8 = 2d(f) - 2m(f) + 2(i(f)+2) +4(b(f)-1) - 8 = 2d(f) - 2 m(f) + 2i(f) + 4b(f)- 8 $, it follows that $\tau(G_0')=\tau(G_0) \leq \frac{8}{3}k < \frac{8}{3}k'$. 
\item \label{mainCase1b} Exactly one of $u$ and $v$, say $v$, becomes an isolated vertex in $G_0^\prime$; see Fig.~\ref{fig:mainCases2}. Then $m(f') = m(f)-1$ and $i(f') = i(f) + 1$. Since $2d(f') - 2 m(f') + 2i(f') + 4b(f')- 8 = 2d(f) - 2(m(f)-1) + 2(i(f)+1) +4(b(f)-1) - 8 = 2d(f) - 2 m(f) + 2i(f) + 4b(f)- 8 $, it follows that $\tau(G_0')=\tau(G_0) \leq \frac{8}{3}k < \frac{8}{3}k'$. 
\item \label{mainCase1c} Neither $u$ nor $v$ becomes an isolated vertex in $G_0^\prime$; see Fig.~\ref{fig:mainCases3}. Then $m(f') = m(f)-2$ and $i(f') = i(f)$. Since $2d(f') - 2 m(f') + 2i(f') + 4b(f')- 8 = 2d(f) - 2(m(f)-2) + 2i(f) + 4(b(f)-1) - 8 = 2d(f) - 2 m(f) + 2i(f) + 4b(f)- 8 $, it follows that $\tau(G_0')=\tau(G_0) \leq \frac{8}{3}k < \frac{8}{3}k'$. 
\end{enumerate}

\item \label{mainCase2} The removal of $(u,v)$ merges a triangular face $\Delta$ (that is, $t(\Delta)=0$) with an adjacent face $f$ of~$G_0$ with $t(f) \geq 3$ into a face $f'$ of $G_0'$. Note that $t(f') \geq 4$.
We consider two cases:
\begin{enumerate}[a)]
\item \label{mainCase2a} Faces $\Delta$ and $f$ share only edge $(u,v)$; see Fig.~\ref{fig:mainCases4}. Then $d(f') = d(f) + 1$, $m(f') = m(f)$, $b(f') = b(f)$, $i(f') = i(f)$. Since $2d(f') - 2 m(f') + 2i(f') + 4b(f')- 8 = 2(d(f)+1) - 2m(f) + 2i(f) +4b(f) - 8 = 2d(f) - 2 m(f) + 2i(f) + 4b(f)- 8 + 2$, it follows that $\tau(G_0')=\tau(G_0) + 2 \leq \frac{8}{3}k + 2 < \frac{8}{3}k'$. 
\item \label{mainCase2b} Faces $\Delta$ and $f$ share at least two edges; see Figs.~\ref{fig:mainCases5} and~\ref{fig:mainCases6}. By removing $(u,v)$, the number of occurrences of the third vertex $v^\prime$ of $\Delta$ increases by one and the number of biconnected components increases by one. Then $d(f') = d(f)$, $m(f') = m(f)+1$, $b(f') = b(f)+1$, $i(f') = i(f)$. Since $2d(f') - 2 m(f') + 2i(f') + 4b(f')- 8 = 2d(f) - 2(m(f)+1) + 2i(f) +4(b(f)+1) - 8 = 2d(f) - 2 m(f) + 2i(f) + 4b(f)- 8 + 2$, it follows that $\tau(G_0')=\tau(G_0) + 2 \leq \frac{8}{3}k + 2 < \frac{8}{3}k'$.
\end{enumerate}

\item The removal of $(u,v)$ yields a face $f'$ of $G_0'$ with $t(f') \in \{1,2, 3\}$. Note that in the previous cases $t(f') \geq 4$. So, if we rule out this case, then the proof follows. We consider two cases, which correspond to Cases~C.\ref{mainCase1} and~C.\ref{mainCase2} for smaller faces, respectively. 
\begin{enumerate}[a)]
\item Face $f'$ is obtained by removing a bridge from a face $f$. Hence, $t(f) = t(f^\prime) - 1$ and $f'$ is disconnected. 
Observe that if $t(f') = 1$, then face $f^\prime$ is not disconnected as can be seen from Figs.~\ref{fig:smallFaces1} and~\ref{fig:smallFaces2}. Therefore, $t(f') \geq 2$ holds in this subcase.
\item Face $f'$ is obtained by merging a face $f$ with a triangular face $\Delta$. Hence, $t(f) = t(f^\prime) - 1$ holds. Since $\Delta$ is triangular, we observe that it does not contribute to $\tau(G_0)$. 

\end{enumerate}
In both cases, the face $f$ that is eliminated in order to create face $f'$ is such that $t(f) = t(f')-1$. We observe that $\tau(G_0')$ is equal to $\tau(G_0)$, plus the contribution of $f'$ to $\tau(G_0')$, minus the contribution of $f$ to $\tau(G_0)$. More precisely:
If $t(f') = 1$, then $\tau(G_0') = \tau(G_0) + \frac{8}{3} - 0 \leq \frac{8}{3}k + \frac{8}{3} =  \frac{8}{3}k'$; see Figs.~\ref{fig:smallFaces1}-\ref{fig:smallFaces2}. If $t(f') = 2$, then $\tau(G_0') = \tau(G_0) + \frac{16}{3} - \frac{8}{3} \leq \frac{8}{3}k + \frac{8}{3} =  \frac{8}{3}k'$; see Figs.~\ref{fig:smallFaces3}-\ref{fig:smallFaces5}. 
Otherwise, $t(f')=3$; see Figs.~\ref{fig:smallFaces6}-\ref{fig:smallFaces11}. This implies that $\tau(G_0') \leq \tau(G_0) + (2d(f') -2 m(f') + 2i(f') +4b(f')- 8) - \frac{16}{3}$. It is easy to verify that $2d(f') -2 m(f') + 2i(f') +4b(f')- 8 \leq 8$ holds for each of the cases shown in Figs.~\ref{fig:smallFaces6}-\ref{fig:smallFaces11}. Hence, $\tau(G_0') \leq \tau(G_0) + \frac{8}{3} \leq \frac{8}{3}k + \frac{8}{3} =  \frac{8}{3} k'$.  
\end{enumerate}
This concludes the proof.\end{proof}

By following a counting similar to Arikushi et al. we obtain a bound on the maximum number of edges of a \rac{1} graph with $n$ vertices, when all the faces of $G_0$ are \good. Since planar graphs have at most $3n-6$ edges even in the presence of non-homotopic parallel edges, the bound is obtained when $7n-14-k = 3n-6-k+\frac{8}{3}k$, that is, $k=\frac{3}{2}(n-2)$. This directly implies that in this case $|E| \leq 5.5n - 11$. 

In the following, we prove that it is not a loss of generality to assume that all faces of $G_0$ are \good, as otherwise we can augment our graph by adding only crossing-free edges to $G$ (not necessarily drawn with one bend but rather as curves), in such a way that every face of $G_0$ becomes \good. Recall that we denote by $G^\prime$ the planarization of drawing $\Gamma$ of $G$.

Assume that there exists a face of $G_0$ that is not \good. Hence, there exist at least two edges belonging to $G_0$ which are incident to the same face $f^\prime$ in $G^\prime$. If $f^\prime$ consists exclusively of edges of $G_0$, then we triangulate $f^\prime$. Otherwise, we traverse the facial walk of $f^\prime$ starting from any dummy vertex of $f^\prime$ and we connect by a crossing-free edge the first occurring vertex that is incident to an edge of $G_0$ with the last occurring vertex that is also incident to an edge of $G_0$. This implies that one of the two faces into which $f^\prime$ is split contains only one crossing-free edge, namely the newly added edge. Note that, in both cases, it is always possible to add the described edges, since we do not require them to be drawn with one bend. Since in both cases, we split a face into smaller faces, this process eventually terminates. At the end, each face is either a triangle of crossing-free edges or contains at most one crossing-free edge. Hence, it is indeed  not a loss of generality to assume that all faces of $G_0$ are \good.

We remark that the aforementioned procedure may result in parallel edges or self-loops, which are however non-homotopic by construction. In particular, a self-loop may appear, when the first and the last occurring vertices in the facial walk are identified and form a cut-vertex of $G$. Note that while Lemma~\ref{lem:crossed-better} allows non-homotopic parallel edges, it does not allow self-loops. Hence, for self-loops we need to use a different approach. Consider self-loop $s$. As already mentioned, $s$ is incident to a cut-vertex of $G$ and encloses a part of $\Gamma$, which we assume not to contain any other self-loop. Let $H_1$ and $H_2$ be the subgraphs of $G$ that are induced by the vertices of $G$ that are in the interior and the exterior of~$s$, respectively. Denote by $n_1$ and $n_2$ the number of vertices of $H_1$ and $H_2$, respectively, and by $m_1$ and $m_2$ their corresponding number of edges. Observe that $n=n_1+n_2-1$. Note that edge $s$ is accounted neither in $H_1$ nor in $H_2$. By induction, we may assume that $m_1 \leq 5.5n_1-11$ and $m_2 \leq 5.5n_2-11$. Hence, graph $G$ (including $s$) contains at most $5.5(n_1+n_2)-22 + 1 = 5.5n - 15.5 \leq 5.5n-11$ edges. This implies that the upper bound holds even in the presence of self-loops.

We are now ready to state the main theorem of this section.

\begin{theorem}
Every $n$-vertex \rac{1} graph has at most $5.5n-11$ edges.
\end{theorem}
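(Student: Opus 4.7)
The plan is to combine Lemmas~\ref{lem:crossed-in-f} and~\ref{lem:crossed-better} with a standard balancing argument, after reducing to the case in which every face of $G_0$ is \good. Starting from a \rac{1} graph $G$ with a crossing-minimum \rac{1} drawing $\Gamma$, I partition its edges into $E_0$ (crossing-free) and $E_1$ (crossed) and obtain the plane subgraph $G_0$. If some face of $G_0$ is not \good, I augment $G$ by the routine described just before the theorem statement, inserting crossing-free curves inside non-good faces until every face of $G_0$ is \good; since this step only adds edges, it is enough to prove the bound for the augmented graph. The augmentation may introduce non-homotopic parallel edges, which Lemma~\ref{lem:crossed-better} already tolerates.

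The augmentation may also create non-homotopic self-loops, which Lemma~\ref{lem:crossed-better} does \emph{not} allow, so I dispatch these separately by induction on $n$. For an innermost self-loop $s$ attached to a cut-vertex of $G$, I split $G$ into the subgraphs $H_1$ and $H_2$ induced by the vertex sets interior to and exterior to $s$, with $n_1+n_2-1 = n$. Applying the inductive hypothesis to $H_1$ and $H_2$ gives $|E(H_i)|\leq 5.5 n_i-11$, and including the edge $s$ itself yields $|E(G)| \leq 5.5(n_1+n_2)-22+1 = 5.5n-15.5\leq 5.5n-11$. A small base case (e.g.\ $n\leq 4$) is checked directly, so that the recursion is well-founded.

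In the remaining case, the augmented $G_0$ is a plane graph (possibly with non-homotopic parallel edges, but no self-loops) in which every face is \good, obtained from a triangulation $T$ by removing some number $k$ of edges. Two upper bounds on $|E_1|$ then come together. The basic Arikushi charging recalled in Section~\ref{sec:preliminaries} gives $|E_1|\leq 4n-8$, hence $|E|\leq (3n-6-k)+(4n-8)=7n-14-k$. On the other hand, since $\tfrac{8}{3}>2$ and $\tfrac{16}{3}>5$, the right-hand side of Equation~\ref{eq:crossed-good} is bounded by the left-hand side of Equation~\ref{eq:crossed-better}, so Lemma~\ref{lem:crossed-better} yields $|E_1|\leq \tfrac{8}{3}k$ and $|E|\leq 3n-6+\tfrac{5}{3}k$. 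As a function of $k$, the first bound decreases and the second increases, so the minimum of the two is maximized at their intersection $k=\tfrac{3}{2}(n-2)$, where both evaluate to $5.5n-11$. The main obstacle is entirely absorbed into Lemma~\ref{lem:crossed-better}; at the level of the theorem itself, the only delicate points are verifying that the augmentation really does not disturb the inequalities (the new edges go into $E_0$ and \good-ness depends only on $G_0$ and the planarization $G'$) and that the self-loop induction terminates.
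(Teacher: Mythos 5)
Your proposal is correct and follows essentially the same route as the paper: reduce to the case where every face of $G_0$ is \good via the augmentation routine, dispatch self-loops by the same cut-vertex induction yielding $5.5n-15.5\leq 5.5n-11$, and balance the bounds $|E|\leq 7n-14-k$ and $|E|\leq 3n-6+\frac{5}{3}k$ at $k=\frac{3}{2}(n-2)$. Your explicit remarks that the coefficients $\frac{8}{3}>2$ and $\frac{16}{3}>5$ make Eq.~\ref{eq:crossed-better} dominate Eq.~\ref{eq:crossed-good}, and that a base case is needed for the self-loop recursion, are sound and match what the paper leaves implicit.
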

%
%
%
%

%

\section{An Improved Lower Bound}
\label{sec:lower-bound}

In this section, we present an improved lower bound for the number of edges of \rac{1} graphs. Our construction is partially inspired by the corresponding lower bound constructions of $2$-planar graphs~\cite{DBLP:conf/compgeom/Bekos0R17} and fan-planar graphs~\cite{DBLP:journals/corr/KaufmannU14} with maximum density.

\begin{theorem}
There exists infinitely many $n$-vertex \rac{1} graphs with exactly $5n-10$ edges.
\end{theorem}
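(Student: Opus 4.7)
The plan is to exhibit, for infinitely many $n$, an $n$-vertex graph $G_n$ with exactly $5n-10$ edges together with an explicit $1$-bend RAC drawing. Since $5n-10 = (3n-6) + (2n-4)$ and a planar triangulation on $n$ vertices has $3n-6$ edges and $2n-4$ triangular faces, the natural target is to start from a planar triangulation as the crossing-free skeleton and then add exactly one bent crossing edge per triangular face.

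First, I would fix a planar triangulation $T_n$ drawn with a very regular geometry, a natural candidate being a sequence of nested polygonal layers triangulated between consecutive layers together with a central vertex. The regularity is important so that every triangular face has a predictable shape and orientation, which allows the extra bent edges to be routed uniformly. Second, I would add one bent edge per triangular face, orienting every segment of every bent edge along one of two fixed perpendicular directions (say horizontal and vertical), with the unique bend placed where the two segments meet. In this way, any crossing between two segments of different orientations is automatically at a right angle. Each bent edge would enter its designated face by crossing one skeleton edge and exit it by crossing another skeleton edge of the correct orientation; the endpoints of the bent edge would be chosen among vertices of nearby faces so that no parallel edges or self-loops arise.

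Counting is then direct: the skeleton contributes $3n-6$ edges, the bent edges contribute $2n-4$ edges, for a total of exactly $5n-10$. By parameterizing the construction by the number of nested layers, this works for infinitely many $n$, and the small additive constant is absorbed into the $-10$ by handling the innermost and outermost layers as fixed gadgets.

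The main obstacle is the simultaneous geometric realizability of all constraints at once: the skeleton must be drawn so that at least two of the three edges of each face can be used as orthogonal crossing targets; no two same-orientation segments of bent edges may ever cross (since such a crossing would fail to be a right angle); and every bent edge must use exactly one bend. The $2$-planar and fan-planar lower-bound constructions of density $5n-10$ that the authors cite provide a natural combinatorial template for which vertex pairs to join by bent edges; the technical work is to convert one of those templates into an actual RAC-compatible drawing whose bends, segment orientations, and face incidences all fit together without conflict.
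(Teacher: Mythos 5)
There is a genuine gap. You have the right target count ($3n-6$ skeleton edges plus $2n-4$ bent edges gives $5n-10$, matching the paper's $\frac{5}{3}(n-2)+\frac{10}{3}(n-2)$), but for an existence statement of this kind the proof \emph{is} the explicit construction, and you defer exactly that part: your final paragraph concedes that the simultaneous realizability of all perpendicularity constraints is ``the technical work'' still to be done. Worse, your plan couples the skeleton geometry to the routing of the bent edges in a way that already breaks on its most natural instantiation. If every bent-edge segment is horizontal or vertical, then every crossed skeleton edge must be axis-parallel, so every triangle needs a horizontal and a vertical leg --- essentially a triangulated grid. Take the lower triangle of a unit cell, with vertical leg from $(i+1,j)$ to $(i+1,j+1)$ and horizontal leg from $(i,j)$ to $(i+1,j)$, and let the designated bent edge join the apexes of the two neighbouring triangles, $(i+2,j+1)$ and $(i,j-1)$. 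A segment entering through the vertical leg must be horizontal, but the horizontal line through $(i+2,j+1)$ meets that leg only at its endpoint $(i+1,j+1)$, a vertex; entering through the horizontal leg instead requires a vertical segment at $x=i+2$, which misses that leg entirely. With only one bend per edge there is no slack to repair this locally, so the routing you describe does not exist for this choice of skeleton and endpoints, and you give no alternative.

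The paper avoids this global coupling altogether: its crossing-free skeleton is not a triangulation but a stack of nested dodecahedral drawings whose faces are all pentagons ($\frac{5}{3}(n-2)$ planar edges and $\frac{2}{3}(n-2)$ faces), and inside each pentagonal face it adds the five chords completing a $K_5$ on the face's vertices. Every crossing is then between two chords of the \emph{same} face and never between a chord and a skeleton edge, so the right-angle condition can be arranged face by face; since only finitely many face shapes occur (two nested regular pentagons plus two other explicitly parametrized symmetric pentagons), the paper simply lists concrete angles and side-length ratios for each shape and concrete bend angles for the chords. If you want to salvage your approach, the key structural change is to confine all crossings to the interior of a single skeleton face, rather than letting the added edges cross the skeleton --- once added edges must cross skeleton edges perpendicularly, the skeleton's edge directions and the bent edges' segment directions become globally entangled, which is precisely the obstacle you identified but did not overcome.
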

\begin{proof}
A central ingredient in our lower bound construction is the dodecahedral graph; see Fig.~\ref{fig:lowerBound1}. This graph admits a straight-line planar drawing in which the outer face is a regular pentagon, and the inner faces can be partitioned into three sets, based on their shape. Namely, the innermost face (shaded in gray in Fig.~\ref{fig:lowerBound1}) is again a regular pentagon, vertically mirrored with respect to the outer one; also, all the faces adjacent to the innermost face have the same shape, which we will describe more precisely later, and the same holds for all the faces adjacent to the outer face. In particular, the drawing of each face is symmetric with respect to the line that is perpendicular to one of its sides (whose length is denoted by $a$ in Fig.~\ref{fig:lowerBound2}) and passes through its opposite vertex (denoted by $A$ in Fig.~\ref{fig:lowerBound2}). Adopting the notation scheme of Fig.~\ref{fig:lowerBound2}, in the following we provide  values for the angles and side length ratios to fully describe the shapes of the faces adjacent to the innermost face and to the outer face; for an illustration, refer to Fig.~\ref{fig:lowerBound1}.

\begin{figure}[t]
	\centering
	\begin{minipage}[b]{0.59\textwidth}
	  \subfloat[\label{fig:lowerBound1}]
	  {\includegraphics[page=1,width=\textwidth]{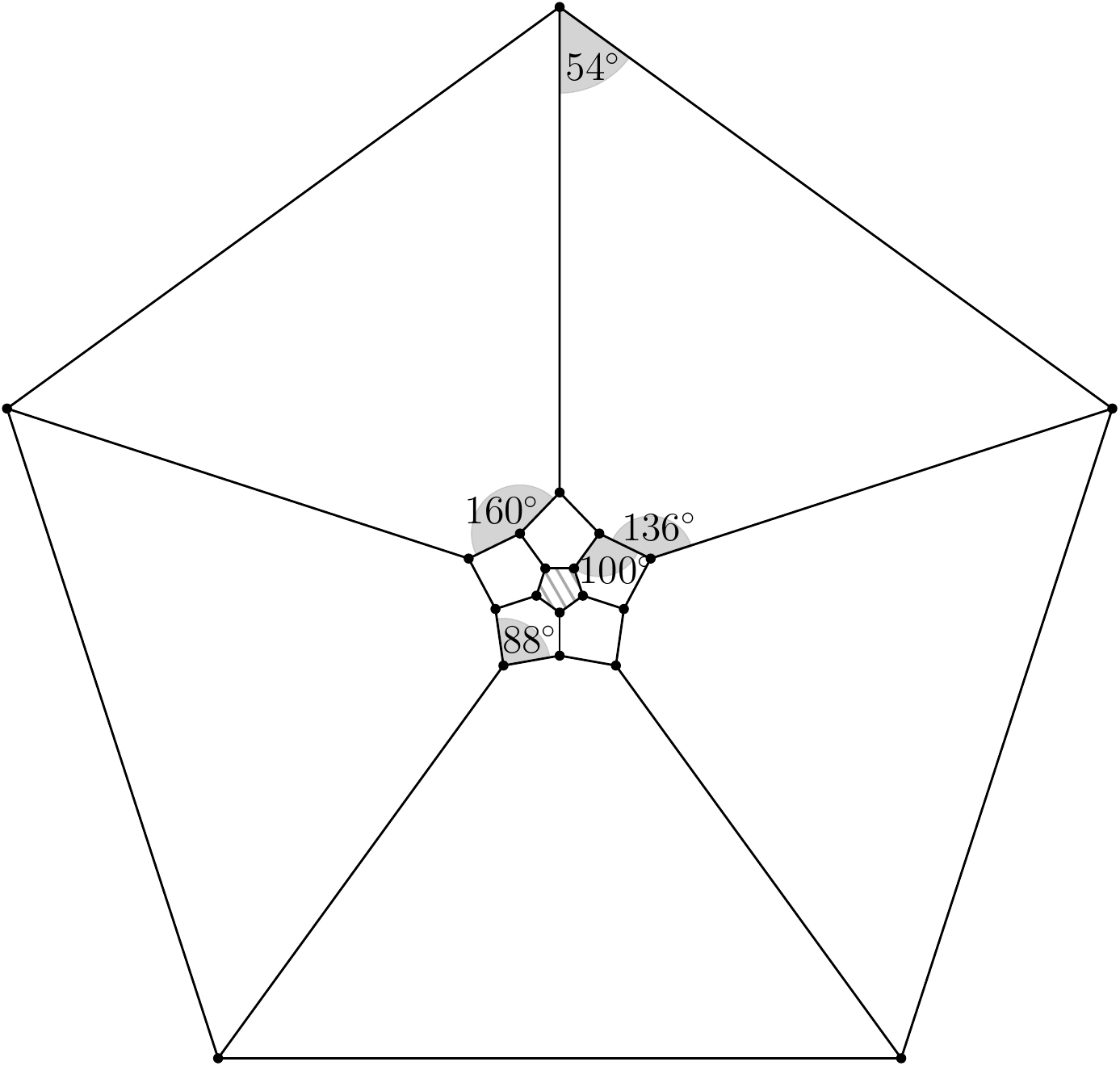}}
	\end{minipage}	
	\hfil 
	\begin{minipage}[b]{0.3\textwidth}
	  \subfloat[\label{fig:lowerBound2}]
	  {\includegraphics[page=7,width=0.9\textwidth]{lowerBound}}
	  
	  \subfloat[\label{fig:lowerBound3}]
	  {\includegraphics[page=8,width=0.9\textwidth]{lowerBound}}
	\end{minipage}	
	\caption{Illustrations for the lower bound construction: 
	(a)~the dodecahedral graph, 
	(b)~angles and edge lengths, and
	(c)~crossing configuration.}
	\label{fig:lowerBound}
\end{figure}

\begin{enumerate}[(i)]
\item The five faces adjacent to the innermost face are realized such that the side of length $a$ is incident to the inner face. Angles $\alpha$ and $\beta$ are $88^\circ$ and $100^\circ$, respectively. In addition, side-length $b$ is $1.5$ times the side-length $a$.

\item The five faces adjacent to the outer face are realized such that the side of length $a$ is incident to the outer face. Angles $\alpha$ and $\gamma$ are $160^\circ$ and $54^\circ$, respectively. In addition, side-length $b$ is $8.5$ times side-length $c$.
\end{enumerate}

Consider two copies $D_1$ and $D_2$ of this drawing of the dodecahedral graph. Since both the innermost face of $D_1$ and the outer face of $D_2$ are drawn as regular pentagons, after scaling the drawing $D_2$ uniformly and mirroring it vertically, we can construct a drawing of a larger graph by identifying the innermost face of $D_1$ with the outer face of $D_2$. This process can be clearly repeated arbitrarily many times. The result is a graph family such that every member of this family admits a straight-line planar drawing, in which each face has one of the shapes described above.

\begin{figure}[t]
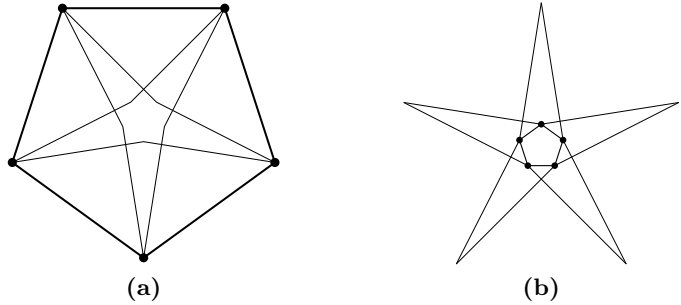

	\centering
	\subfloat[\label{fig:lowerBound-innermost}]
	{\includegraphics[page=2,width=.3\textwidth]{lowerBound}}
	\hfil
	\subfloat[\label{fig:lowerBound-outermost}]
	{\includegraphics[page=5,width=.3\textwidth]{lowerBound}}
	\caption{Chords inside 
	(a)~the innermost face, and 
	(b)~the outer face.}
	\label{fig:lowerBoundBoundaryFaces}
\end{figure}

For our lower bound construction, we add five chords in the interior of each face of every member of the above family. Hence, the five vertices that are incident to each face induce a complete graph $K_5$. In the following, we describe how to draw such chords in the interior of each of the aforementioned faces, based on their shape, so that the resulting drawing is \rac{1}. For an illustration of the configuration of the crossing edges in each of these faces refer to Fig.~\ref{fig:lowerBound3}; we will formally define angles $\alpha_1,\beta_1,\beta_2,\gamma_1,\gamma_2$ shortly. Observe that all edges and the formed angles are symmetric with respect to the line through vertex $A$ that is perpendicular to $C_1C_2$. Also, for every three vertices $u$, $w$, and $v$ that are consecutive along the boundary of the face, the chord $(u,v)$ will cross both chords incident to $w$, making a bend between these two crossings. In the following, we provide values for the angles $\alpha_1,\beta_1,\beta_2,\gamma_1,\gamma_2$ to fully describe the configurations of the crossing edges.
\begin{enumerate}[(i)]
\item For the innermost face, $\alpha_1=\beta_1=\beta_2=\gamma_1=\gamma_2=45^\circ$ holds; refer to Fig.~\ref{fig:lowerBound-innermost}. 
\item For the outer face, $\alpha_1=\beta_1=\beta_2=\gamma_1=\gamma_2=45^\circ$ holds; refer to Fig.~\ref{fig:lowerBound-outermost}. 
\item For the five faces neighboring the innermost face, $\alpha_1=40^\circ$, $\beta_1=30^\circ$, $\beta_2=50^\circ$, $\gamma_1 =45^\circ$ and $\gamma_2 =60^\circ$ holds; refer to \arxapp{Fig.~\ref{fig:lowerBoundFace1} in Appendix~\ref{app:lowerBound}}{\cite{arxivVersion}}. 
\item For the five faces neighboring the outer face, $\alpha_1 =47.5^\circ$, $\beta_1 =85^\circ$, $\beta_2 =42.5^\circ$, $\gamma_1 =45^\circ$ and $\gamma_2 =5^\circ$ holds; refer to \arxapp{Fig.~\ref{fig:lowerBoundFace2} in Appendix~\ref{app:lowerBound}}{\cite{arxivVersion}}. 
\end{enumerate}

It follows that each graph in the family admits a \rac{1} drawing. Let~$G_n$ be such a graph with $n$ vertices. Next, we discuss the exact number of edges of graph $G_n$.  Since the crossing-free edges of $G_n$ form a planar graph, whose faces are all of length $5$, it follows by Euler's formula that this graph has $\frac{5}{3}(n-2)$ edges and $\frac{2}{3}(n-2)$ faces. Since each of these faces contains five chords, the number of edges of $G_n$ is $\frac{5}{3}(n-2) + 5\cdot \frac{2}{3}(n-2)= 5n-10$, and the statement follows. 
\end{proof}

\section{Conclusions}\label{sec:conclusions}

In this paper, we improved the previously best lower and upper bounds on the number of edges of \rac{1} graphs. The gap between our lower and upper bound is approximately $n/2$. A future challenge will be to further narrow this gap. We conjecture that an $n$-vertex \rac{1} graph cannot have more than $5n-10$ edges (as it is the case for several other classes of beyond planar graphs; see e.g.~\cite{DBLP:conf/gd/BaeBCEEGHKMRT17,DBLP:journals/corr/KaufmannU14,PachT97}). Significantly more difficult seems to be the problem of improving the current best lower and upper bounds on the number of edges of \rac{2} graphs, where the gap is significantly wider (approx., $67n$). Closely connected are also complexity related questions; in particular, the characterization and recognition of $1$- and \rac{2} graphs are still open.

\paragraph{Acknowlegdment.} This project was supported by DFG grant KA812/18-1.

\bibliographystyle{splncs03}
\bibliography{abbrv,references}

\arxapp{\clearpage

\appendix

\section{Additional Figures for the Lower Bound Construction}
\label{app:lowerBound}

\begin{figure}[htb!]
	\centering
	\includegraphics[page=3,width=0.7\textwidth]{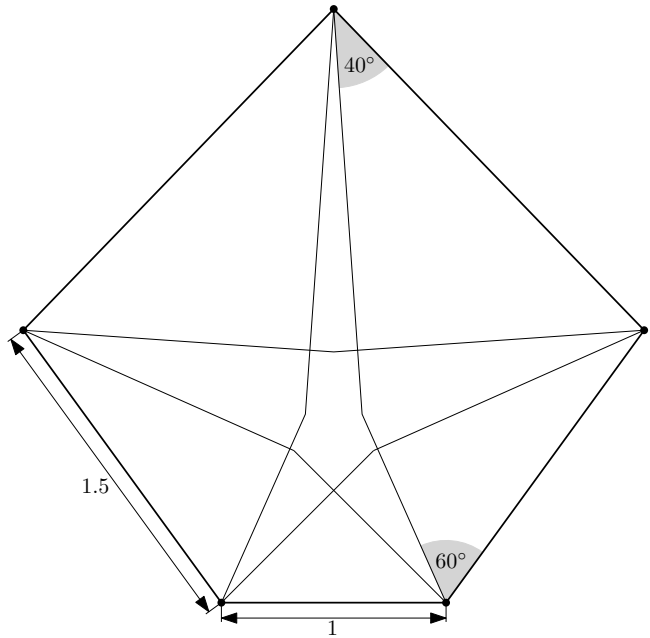}
	\caption{Chords inside faces neighboring the innermost face.}
	\label{fig:lowerBoundFace1}
\end{figure}

\begin{figure}[htb!]
	\centering
	\includegraphics[page=4,width=0.7\textwidth]{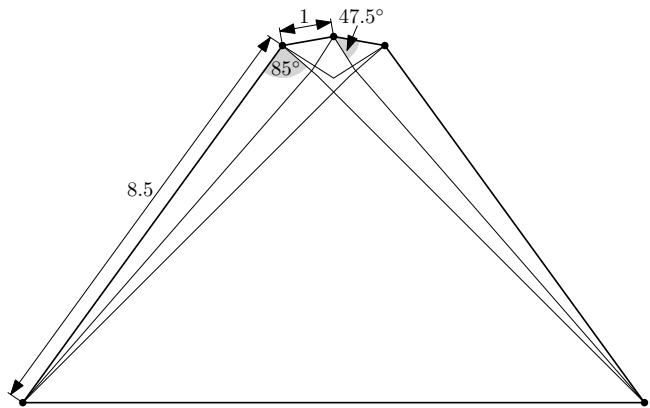}
	\caption{Chords inside faces neighboring the outer face.}
	\label{fig:lowerBoundFace2}
\end{figure}}

\end{document}